\newtheorem{proposition}{Proposition}
\newtheorem{theorem}{Theorem}
\newtheorem{proof}{Proof}
\begin{document}

\title{Filtering from Observations on Stiefel Manifolds}

\author{J\'er\'emie~Boulanger, Salem~Said, Nicolas~Le~Bihan,
and~Jonathan~H.~Manton,~\IEEEmembership{Senior~member}
\IEEEcompsocitemizethanks{
\IEEEcompsocthanksitem J. Boulanger is with the University of Rouen, France. email: {\tt jeremie.boulanger1@univ-rouen.fr}.\protect\\
\IEEEcompsocthanksitem S. Said is with the IMS, University of Bordeaux, France. email: {\tt salem.said@ims-bordeaux.fr}
\IEEEcompsocthanksitem N. Le Bihan is with the CNRS, University of Melbourne, Australia. email: {\tt nicolas.le-bihan@gipsa-lab.grenoble-inp.fr}. His research was supported by the ERA, European Union, through the International Outgoing Fellowship (IOF GeoSToSip 326176) program of the 7th PCRD.
\protect\\
\IEEEcompsocthanksitem J.H. Manton is with the University of Melbourne, Australia. email: {\tt jmanton@unimelb.edu.au} }
}


\maketitle

\begin{abstract}
This paper considers the problem of optimal filtering for partially observed signals taking values on the rotation group. More precisely, one or more components are considered not to be available in the measurement of the attitude of a 3D rigid body. In such cases, the observed signal takes its values on a Stiefel manifold. It is demonstrated how to filter the observed signal through the \emph{anti-development} built from observations. A particle filter implementation is proposed to perform the estimation of the signal partially observed and corrupted by noise. The sampling issue is also addressed and interpolation methods are introduced. Illustration of the proposed technique on synthetic data demonstrates the ability of the approach to estimate the angular velocity of a partially observed 3D system partially observed.
\end{abstract}


\section{Introduction}

In numerous engineering problems, systems with states having values and evolving on the special orthogonal group $SO(n)$ can be encountered \cite{Bloch,Lovera,Zamani11,barrau13,zamani13,Landis14}. In order to control such systems, their angular velocity must be estimated from possibly noisy measurements. This paper considers the case where only partial observations of the system are available, {\em i.e.} not all the components of the movement are recorded. The observation signal is modeled as a process taking its values on a Stiefel manifold. In addition, the presence of a multiplicative noise is considered in the observation process. Classical methods, including extended Kalman filter \cite{Bourmaud,Canet} can not be applied directly here as they rely on the independent increments assumption. As explained later, it is not the case in the model we consider here. We propose to use the {\em anti-development signal} computed from the observed data. We present the way to build this signal, and adress the sampling/interpolation issue as weel. We also demonstrate how to perform optimal filtering on the anti-development signal. A numerical solution (particle filter) via a Monte-Carlo method is provided to perform this filtering and illustrated on the Stiefel manifold ${\cal S}^2$. The proposed technique is however valid for higher dimension Stiefel manifolds.

The rest of the paper is organized as follows. Section \ref{sec_geom} presents the geometry of Stiefel manifolds based on the geometry of $SO(n)$ and the concept of horizontal space. Section \ref{sec_filt_st} presents a time continuous theoretical solution to the filtering problem with observations in Stiefel manifolds. As opposed to the usual case, the noise cannot be considered additive anymore here in our model. The proposed solution is based on the antidevelopment, a defined with respect to the observation process that satisfies an additive noise model. Section \ref{sec_interp} presents a theorem to overcome the problem of discrete sampling. Section \ref{sec_impl} gives a practical solution based on a Monte-Carlo method for filtering. Finally Section \ref{sec_son} considers the case of observation in $SO(n)$ and compare different approximation to the optimal solution.

\section{Geometry of Stiefel manifolds}
\label{sec_geom}

The Stiefel manifold $V_{n,k}$ 
is the set of orthonormal $k$-frames in $\mathbb{R}^n$. It is well known and used in linear algebra to describe principal subspaces \cite{Edelman} and has found applications in sensors array \cite{Micka}, statistics \cite{chikuze}, optimization \cite{absil}, channel estimation in wireless communications \cite{Ozdemir} or in light independent scene representation in computer vision \cite{Lui09}.

First, recall that a $n \times n$ matrix $R$ with real components is an element of the rotation group $SO(n)$ if it is orthogonal and has a unit determinant. This is to say that $R \in SO(n)$ iff:
\begin{align}
R^TR=I_n \quad \text{and} \quad \det R=1
\end{align}
where $I_n$ denotes the $n\times n$ identity matrix. Intuitively, $SO(n)$ is the set of positively oriented orthonormal basis vectors of $\mathbb{R}^n$.

In ${\mathbb R}^n$, the Stiefel manifold $V_{n,k}$ is defined as the set of matrices $P\in\mathbb{R}^{n\times k}$ such that:
\begin{equation}
P^TP=I_k
\label{eq_def_stiefel}
\end{equation}
and with $k \leq n$. For example, if $k=1$, then $V_{n,1}$ is the hypersphere $S^{n-1}$, {\em i.e.} the set of unit vectors in $\mathbb{R}^n$. If $k=n$, then $V_{n,n}$ corresponds to the orthogonal group $O(n)$.

Let $\Pi:SO(n)\rightarrow V_{n,k}$ be the projection consisting in the truncation of the $n-k$ last columns of a rotation matrix, and let us denote:
\begin{equation}
\Pi(R)=P
\label{eq_proj}
\end{equation}

When $k\leq n-2$, the projection is not injective. In this case, a matrix $P\in V_{n,k}$ can be completed by different sets of orthonormal vectors to form an oriented orthonormal basis of $\mathbb{R}^n$, which means that in such cases:
$$
\Pi(R_1)=\Pi(R_2) \Leftrightarrow R_1=R_2\left(\begin{matrix}I_k&0\\0&C\end{matrix}\right)
$$
with $R_1,R_2\in SO(n),\  C\in SO(n-k)$.

However, if $k\leq n$, then $\Pi$ is clearly surjective, \textit{i.e} $\Pi(SO(n))\in V_{n,k}$ as the $k$ first columns of a rotation matrix are orthonormal vectors. 
Therefore, $\Pi(R)^T\Pi(R)=I_k$ for $R\in SO(n)$. The Stiefel manifold $V_{n,k}$ can then be described as:
\begin{equation}
V_{n,k}=\left\{\Pi(R),R\in SO(n)\right\}
\label{eq_def_stiefel2}
\end{equation}

Note that the case $k=n$ needs special care. Indeed, $V_{n,n}=O(n)$ is the group of orthonormal matrix and is composed of two connected components: the set of orthonormal matrices with a positive determinant (positively oriented basis) $SO(n)$ and the set of orthonormal matrices with a negative determinant (negatively oriented basis). 
In this study, we will consider continuous random processes $P_t \in V_{n,k}$ which will solely belong to the same component of $V_{n,k}$ as their initial value $P_0$ belongs to. Therefore, if $\det(P_0)=+1$, then $\Pi=Id$ covers all the reachable points in the Stiefel manifold from $SO(n)$. If $\det(P_0)=-1$, considering $\Pi(R)$ as the application reversing the sign of the last column of $R$ allows $\Pi$ to cover all the reachable points in the Stiefel manifold from $SO(n)$. Consequently, expression  (\ref{eq_def_stiefel2}) can be extended to the case where $k=n$, by considering only one connected component. This case will be considered in Section \ref{sec_son}.

As $V_{n,k}$ can be constructed from $SO(n)$, we now investigate how the geometry of $V_{n,k}$ can be described using the geometry of $SO(n)$. From its definition, the projection $\Pi$ is left invariant:
\begin{equation}
R_1\Pi(R_2)=\Pi(R_1R_2)
\label{eq_action_group}
\end{equation}
with $R_1,R_2\in SO(n)$.

As $\Pi$ is surjective, one also get the action of $SO(n)$ on $V_{n,k}$. If $R\in SO(n)$ and $P\in V_{n,k}$, then $RP\in V_{n,k}$. This property will be used later on to describe a process on $V_{n,k}$ via the action of $SO(n)$. This group action can be visualized by considering the example of the sphere $V_{3,1} \cong {\cal S}^2$. Points at the surface of the sphere ${\cal S}^2$ can reach all the locations on this manifold through the transitive action of $SO(3)$ on the sphere: $SO(3) \times {\cal S}^2 \rightarrow {\cal S}^2$.

First, let us identify the tangent bundle of $V_{n,k}$, denoted $TV_{n,k}$. It will be of use in Section \ref{sec_filt_st} to define stochastic processes via the action of $SO(n)$ in the space tangent to a point in $V_{n,k}$. Denote $\mathfrak{so}(n)$ the Lie algebra\footnote{$\mathfrak{so}(n)$ is the algebra of real-valued anti-symmetric matrices of size $n \times n$.} associated to the Lie group $SO(n)$ and let $\chi:\mathfrak{so}(n)\times V_{n,k}\rightarrow TV_{n,k}$ be the application defined by:
\begin{equation}
\chi(\sigma,P)=\left(\left.\frac{d}{dt}\exp(t\sigma)P\right)\right|_{t=0} =\sigma P
\end{equation}

We can show by inclusion and dimension equality that $\chi(.,P)$ is surjective onto $T_PV_{n,k}$, \textit{i.e} $T_PV_{n,k}=\{\sigma P, \sigma\in \mathfrak{so}(n)\}$, where we used the notation $T_PV_{n,k}$ for the tangent space attached to a point $P \in V_{n,k}$.

Now, for a given point $P \in V_{n,k}$, let $R\in SO(n)$ be a pre-image of $P$ via $\Pi$, \textit{i.e} $P=\Pi(R)$. As $\Pi$ is surjective, $\Pi^{-1}(P)\neq\varnothing$ and $R$ is well defined. Then, the vertical space \cite{Lui09} at the point $P$, denoted $\mathcal{V}_R$, is defined as:
\begin{equation}
\mathcal{V}_R=\operatorname{Ker}d\Pi_R
\label{equation}
\end{equation}
where $d\Pi_R$ is the differential of $\Pi$ at the point $R$. By definition of $d\Pi_R$, the vertical space $\mathcal{V}_R$ is a subspace of the tangent space $T_RSO(n)$. In the case when $n=3$ and $k=1$, then $V_{3,1} \cong S^2$ and $P$ is a point on the unit sphere in ${\mathbb R}^3$. The vertical space corresponds to the set of rotations which have their axis aligned with $P$. Such rotations leave $P$ invariant. Figure \ref{fig_vertic_v2} displays a graphical interpretation of the vertical space $\mathcal{V}_R$.

\begin{figure}[h!]
\centering
\subfigure{\epsfig{figure=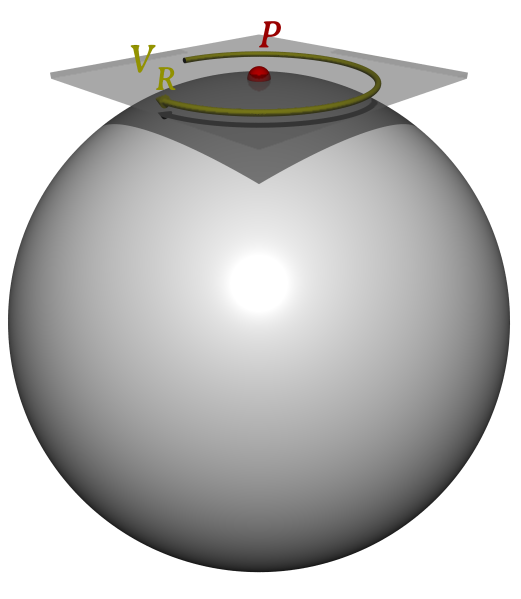,width=5cm,height=5.5cm}}
\caption{Graphical representation of the vertical space ${\cal V}_R$ at $P$ for the case $P \in V_{3,1}$. For a rotation $R$ acting on $P$,  ${\cal V}_R$ is the orthogonal complement of ${\cal H}_R$ in $T_PV_{3,1}$. If the axis of the rotation of $R$ is parallel to $P$, then its action is in ${\cal V}_R$ and it is not visible as $P$ is rotating about itself. 
\label{fig_vertic_v2}}
\end{figure}

\begin{figure}[h!]
\centering
\subfigure{\epsfig{figure=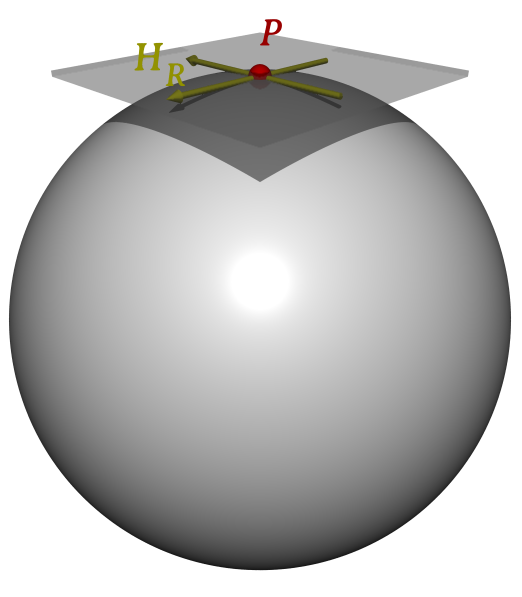,width=5cm,height=5.5cm}}
\caption{Graphical representation of the horizontal space ${\cal H}_R$ at $P$ for the case $P \in V_{3,1}$. For a rotation $R$ acting on $P$, ${\cal H}_R$ is a subspace of $T_PV_{3,1}$. If the axis of rotation of $R$ is orthogonal to $P$, then its action is in $\mathcal{H}_R$ and it is visible.
\label{fig_horiz_s2}}
\end{figure}

Making use of the standard scalar product on $\mathfrak{so}(n)$ which reads for any skew-symmetric matrices $\sigma,\varsigma \in \mathfrak{so}(n)$ like: 
\begin{equation}
<\sigma,\varsigma>=\frac{1}{2}\operatorname{tr}(\sigma^T\varsigma)
\end{equation}
it is possible to construct the orthogonal complement of $\mathcal{V}_R$ in $T_RSO(n)$, called the {\em  horizontal space} and denoted $\mathcal{H}_R$. We have then that:
\begin{equation}
T_RSO(n)=\mathcal{V}_R\oplus \mathcal{H}_R
\end{equation}

Figure \ref{fig_horiz_s2} displays a graphical interpretation of the horizontal space $\mathcal{H}_R$.

As $d\Pi_R$ is linear, the restriction of $d\Pi_R$ to $\mathcal{H}_R$ is bijective. In other words, $T_PV_{n,k}$ and $\mathcal{H}_R$ are isomorphic. For a vector $v\in T_PV_{n,k}$, let $v^{\mathcal{H}}\in T_RSO(n)$ be the vector defined as:
\begin{equation}
d\Pi_R(v^{\mathcal{H}})=v
\end{equation}


For example, consider again the case of the Stiefel manifold $V_{3,1} \cong {\cal S}^2$. Considering $P=(1,0,0)^T\in V_{3,1}$, the matrices $R_1=(e_1,e_2,e_3)$ and $R_2=(e_1,e_3,-e_2)$ for $\{e_i\}_{i\leq3}$ the canonical basis of $\mathbb{R}^3$ are both pre-images of $P$, \emph{i.e.} $\Pi(R_1)=\Pi(R_2)=P$. The application $\chi(.,P)$ describes the tangent space $T_PV_{3,1}$ like $\chi(\sigma,P)=(0,\sigma_{21},\sigma_{31})^T$ where $\sigma_{ij}$ is the $(i,j)$ matrix elements of $\sigma \in \mathfrak{so}(3)$. At the pre-image $R_1$, the vertical and horizontal spaces are thus defined as:
$$
\mathcal{V}_{R_1}=\left\{\left(\begin{matrix}0&0&0\\0&0&\alpha\\0&-\alpha&0\end{matrix}\right),\alpha\in\mathbb{R}\right\}
$$
and:
$$
\mathcal{H}_{R_1}=\left\{\left(\begin{matrix}0&\beta&\gamma\\-\beta&0&0\\-\gamma&0&0\end{matrix}\right),\beta, \gamma\in\mathbb{R}\right\}.
$$
At the pre-image $R_2$, the vertical and horizontal spaces are defined as:
$$
\mathcal{V}_{R_2}=\left\{\left(\begin{matrix}0&0&0\\0&\alpha&0\\0&0&\alpha\end{matrix}\right),\alpha\in\mathbb{R}\right\}
$$
and:
$$
\mathcal{H}_{R_2}=\left\{\left(\begin{matrix}0&\beta&\gamma\\\gamma&0&0\\-\beta&0&0\end{matrix}\right),\beta, \gamma\in\mathbb{R}\right\}.
$$
One can direclty check that spaces $\mathcal{H}_{R_1}R_1^T$ and $\mathcal{H}_{R_2}R_2^T$ are identical. This is true even if the horizontal subspaces $\mathcal{H}_{R_1}$ and $\mathcal{H}_{R_2}$ are different, and  is a consequence of the fact that they are defined by a different pre-image of $P$. A graphical illustration of the notion of horizontal and vertical spaces is displayed in figure \ref{fig_horiz}

\begin{figure}
\centering
\centerline{\epsfig{figure=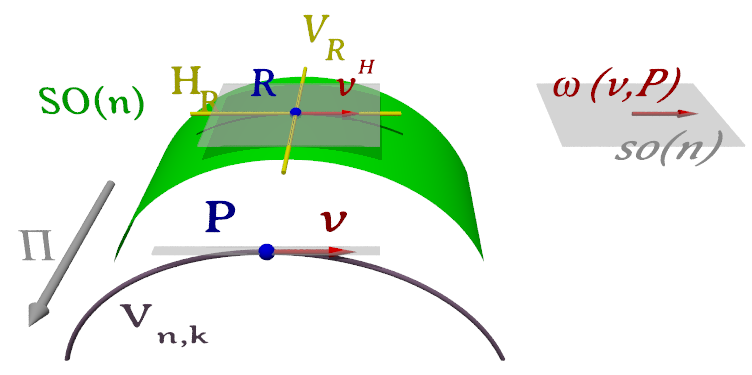,width=8cm,height=4cm}}
\caption{Illustration of the different notions introduced to describe the Stiefel manifold $V_{n,k}$ as the image from the projection $\Pi$ of $SO(n)$. The horizontal $\mathcal{H}_R$ and vertical $\mathcal{V}_R$ spaces are dependent of the chosen pre-image but the translation into $\mathfrak{so}(n)$ via $\omega$ is invariant with respect to the choice of the pre-image.
\label{fig_horiz}}
\end{figure}

Due to the isomorphism between $T_PV_{n,k}$ and $\mathcal{H}_R$, the application $\chi(.,P)$ restricted to $\mathcal{H}_RR^T\in\mathfrak{so}(n)$ with $\Pi(R)=P$ is bijective. In other words, $\chi(.,P)|_{\mathcal{H}_R}$ is invertible. Let $\omega:TV_{n,k}\rightarrow\mathfrak{so}(n)$ denote this inverse and let us call it the \emph{restricited inverse}. It then reads:
\begin{equation}
\omega(v,P)=v^\mathcal{H}R^T.
\label{eq_x_inv}
\end{equation}
The term $v^{\mathcal{H}}$ is the horizontal vector from the tangent space to $R$. Despite the definition of $\omega$ being dependent on $R$, this is not the case because $v^{\mathcal{H}}$ also depends on $R$, and, in the end, the term $v^\mathcal{H}R^T$ ii independent of $R$. Finally, it is possible to define a metric on $V_{n,k}$ using the metric on $SO(n)$. Let $<.,.>_P$ be the metric defined as:
\begin{equation}
<v_1,v_2>_P=<v_1^{\mathcal{H}},v_2^{\mathcal{H}}>_R
\label{eq_inner_st}
\end{equation}
for any two vectors $v_1,v_2\in T_PV_{n,k}$, with $\Pi(R)=P$, and where $<.,>_R$ denotes the scalar product defined in $T_RSO(n)$.

\section{Filtering from observations on Stiefel manifolds}
\label{sec_filt_st}


We consider the problem of a partially observed system whose state evolves on the rotation group $SO(n)$. In practice, such observations may come from flawed sensors or devices, leading to the availability of a limited part of the signal to filter. For example, in the context of satellite's control, existing algorithms require the knowledge of the angular velocity and the orientation of the satellite to monitor its orientation \cite{Lovera}\cite{Bloch}. This angular velocity is determined from different internal sensors. However, if some of these sensors become faulty, the velocity of the satellite is no more available and the satellite cannot be controlled properly anymore. 

The presented algorithm proposes to tackle the problem of lack in parts of the signal to filter and takes advantage of the available observations to perform optimal filtering. More precisely, we present a technique to obtain an estimate of the velocity of the system with only partial observations of its orientation, \emph{i.e.} partial observations on $SO(n)$.

\subsection{Observation model}

The model considered is as follows: a process $S_t\in SO(n)$ is defined by its angular velocity $x_t\in \mathfrak{so}(n)$ where $t$ represents time. Our aim is to obtain an estimate of the angular velocity $x_t$ based on observations of $S_t$ which are not complete as well as noisy. The process $x_t$ is here assumed to be the solution of the following linear stochastic differential equation in $\mathfrak{so}(n)$:
\begin{equation}
\label{eq_model}
dx_t=Fx_t+db_t
\end{equation}
where $b_t\in\mathfrak{so}(n)$ is a Brownian motion with variance $\sigma_b^2$. In this case, $x_t$ is a Markov process and its transition kernel for time $t+s$ based on $x_s$ is denoted $q_t(x_s,\ .)$.

The partial observation is here modeled as a process $P_t$ on the Stiefel manifold $V_{n,k}$, \textit{i.e} only $k$ components of $S_t$ amongst the total of $n$ components are known. The filtering problem then reads: we want to estimate $x_t\in \mathfrak{so}(n)$ from $P_t\in V_{n,k}$ defined as $P_t=\Pi(S_t)$ in the presence of noise. The noise is modeled by a Brownian motion $w_t\in\mathfrak{so}(n)$ with variance $\sigma_w^2$ independent from $x_t$ acting in $\mathfrak{so}(n)$. As $x_t$ is the angular velocity of the observed processed $P_t$, then $P_t$ is solution of the stochastic differential equation:
\begin{equation}
dP_t=\left(x_tdt+\circ dw_t\right)P_t
\label{eq_obs_stie}
\end{equation}
where notation $\circ$ is used to denote the Stratonovich integral.

Due to the presence of the noise $w_t$, $x_t$ cannot be exactly determined. Instead, we want to determine the distribution $\pi_t$ of $x_t$ conditioned by the observation of $\mathcal{P}_s=\{P_s,s\leq t\}$. It is possible to construct some estimator for $x_t$ based on its conditional distribution $\pi_t$.

It is noticeable that despite that the noise acts additively in the tangent space $T_{P_t}V_{n,k}$, it acts as a multiplicative noise for the process $P_t$, preventing us from using usual filtering methods. Indeed, classical methods like Kalman filter rely on the independence of the increments $dP_t$. However, this is not applicable in our case as the increments depends of $P_t$.

\subsection{The anti-development solution}

We propose a solution based on the concept of anti-development. It consists in constructing a process $z_t$ in one-to-one correspondence with $P_t$ such that $z_t$ is solution of a stochastic differential equation with additive noise. The likelihood used to compute the solution is then based on $z_t$.

Let $z_t\in\mathfrak{so}(n)$ and $R_t\in SO(n)$ be defined as:
\begin{equation}
\begin{aligned}
dz_t&=\omega(\circ dP_t,P_t)\\
dR_t&=(\circ dz_t)R_t
\end{aligned}
\label{eq_antidev_st1}
\end{equation}
where $\omega$ is as defined previously in (\ref{eq_x_inv}), the \emph{restricted inverse} of $\chi$.

The process $z_t$ is called the anti-development of $P_t$ and $R_t$ is called the horizontal lift of $P_t$ \cite{Abs08}. An illustrative example of the anti-development on $V_{3,1}=S^2$ is presented in figure \ref{fig_antidev_st}.
\begin{figure}
\center
\centerline{\includegraphics[width=.5\linewidth]{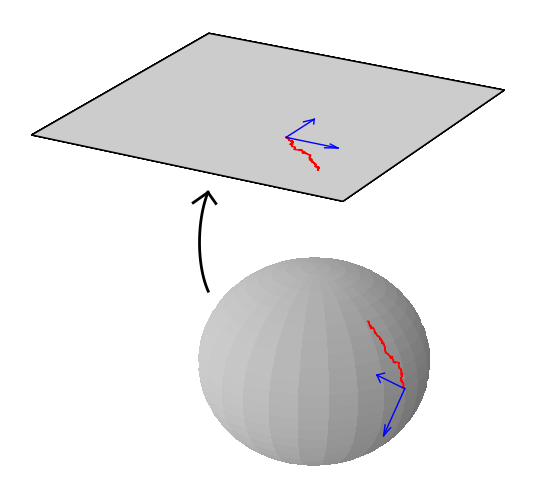}}
\centerline{\includegraphics[width=.5\linewidth]{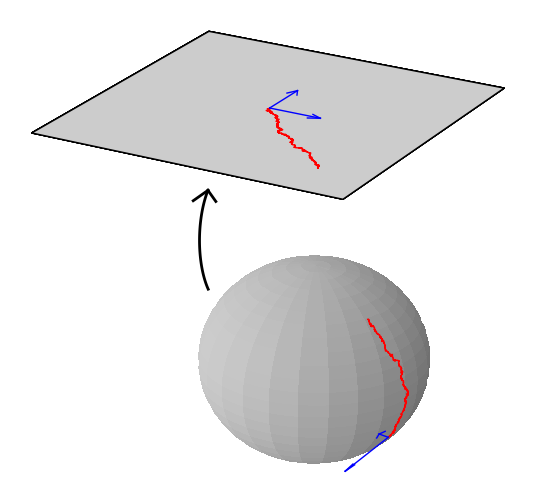}}
\centerline{\includegraphics[width=.5\linewidth]{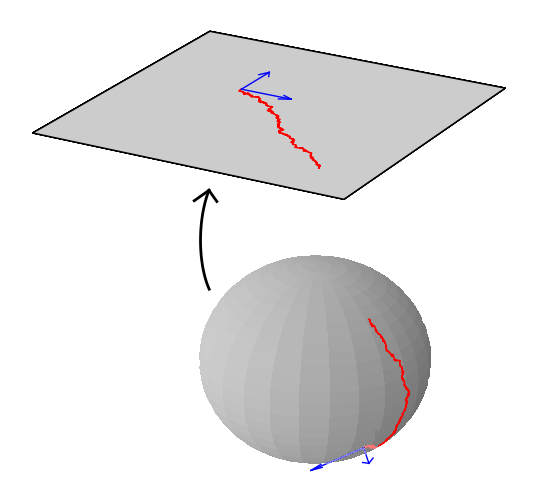}}
\caption{Example of a trajectory of $P_t$ (red on the sphere) on $V_{3,1}=S^2$ at three successive times (Top to Bottom). The anti-development $z_t$ is displayed in red in the plane over the sphere. It can be obtained by considering the trace left by the sphere when rolling without slipping on the plane, and with $P_t$ as a contact point. The anti-development is solution of a stochastic differential equation with additive noise, as opposed to $P_t$. \label{fig_antidev_st}}
\end{figure}

The process $z_t$ is the accumulation of the increments in the tangent space $T_PV_{n,k}$ whereas $R_t$ is the rotational process constructed by considering that the component in the vertical space is null. These processes are equivalent, in terms of information to $P_t$ because $P_t$ can be constructed like $P_t=R_tR^T_0P_0$ or $dP_t=\chi(dz_t,P_t)$.
Conditioning the distribution of $x_t$ by the observation of $P_t$ is equivalent as conditioning by the observation of $z_t$. However, the anti-development is a solution, as opposed to $P_t$, of a stochastic differential equation with additive noise.  It should be noticed that in general cases, $S_t\neq R_t$. Despite the vertical component has no action in the Stiefel manifold $V_{n,k}$, it still has some effect in $SO(n)$. This involves that in general, $S_t\neq R_t$. 

Replacing $dP_t$ in (\ref{eq_antidev_st1}) by its expression from (\ref{eq_obs_stie}) gives:
\begin{align}
dz_t&=\omega(dP_t,P_t)\notag\\
&=\omega\left(\left(x_tdt+\circ dw_t\right)P_t,P_t\right)\notag\\
&=\omega(\left(\chi\left(x_tdt+\circ dw_t\right),P_t\right),P_t)\notag\\
&=\omega(H_t,P_t)+\circ d\beta_t
\label{eq_antidev_st2}
\end{align}
where $H_t=\chi(x_t,P_t)$ and $d\beta_t=\omega\left(\chi\left(\circ dw_t,P_t\right),P_t\right)$.
By definition of $d\beta_t$, the process $\beta_t$ is constructed from the $k$ first components of $w_t$. Therefore, $\beta_t$ is a Brownian process with a variance that can be diagonalized as $\sigma_w^2I_{n,k}$ where $I_{n,k}=\operatorname{diag}(1,...,1,0,...,0)$ with $k$ non-zero elements.

This way, our filtering problem from observations in the Stiefel manifold $V_{n,k}$ with multiplicative noise is now reduced to a filtering problem in $\mathfrak{so}(n)$ with additive noise. For a test function $\phi$, we want to determine $\pi(\phi)=\mathbb{E}[\phi(x)|\mathcal{P}_t]$. The solution is therefore given by applying usual filtering methods \cite{Jazw} to the anti-development $z_t$ defined in (\ref{eq_antidev_st1}):
\begin{equation}
\pi_t(\phi)=\frac{\rho_t(\phi)}{\rho_t(1)},
\label{eq_ks_st}
\end{equation}
with $\rho_t(\phi)=\mathbb{E}\left[\phi(x')L_t(P,x')|\mathcal{P}_t\right]$, where $x'_t$ is a copy of $x_t$ independent of $P_t$ and $\mathcal{P}_t=\{P_s,s\leq t\}$. The likelihood $L_t$ is defined as:
\begin{equation}
L_t(P,x')=\exp\left( \frac{1}{\sigma_w^2}\int_0^t <x'_s,dz_s>-\frac{1}{2}||x'_s||^2 ds\right)
\label{eq_likelihood_sti}
\end{equation}
with $dz_t=\omega(\circ dP_t,P_t)$. By definition of the inner product in (\ref{eq_inner_st}), the likelihood in (\ref{eq_likelihood_sti}) can be rewritten like:
\begin{equation}
L_t(P,x')=\exp\left(\frac{1}{\sigma_w^2} \int_0^t <H'_s,dP_s>-\frac{1}{2}||H'_s||^2 ds\right)
\label{eq_likelihood_st}
\end{equation}
with $H'_s=\chi(x'_s,P_s)$ a copy of $H_s$ in the distribution sense. Expression (\ref{eq_likelihood_st}) is more amenable than the one from equation (\ref{eq_likelihood_sti}) as it does not require the computation of $R_t$. The integrand can directly be determined from the observations without constructing any auxiliary process. However, using expression (\ref{eq_likelihood_st}), the model for $H_t$ is not linear, even if $x_t$ is the solution of a linear model. As a consequence, expressing $\rho_t(\phi)$ is a complicated task. Nevertheless, it is still possible to get an approximation of the solution, using numerical methods, like the particle filter for example.
Before proposing a filtering solution, we address the issue due to the discrete nature of the observation of $P_t$.

\section{Interpolation function}
\label{sec_i1nterp}
It must be noted that the likelihood function $L_t$ given in (\ref{eq_likelihood_st}) requires the full observation of the process $\{P_s\}_{s\leq t}$ to compute the integrand. In practice, it is not possible to have a continuous observation of $P_t$. Only discrete samples are available. Let $\delta t$ be the sampling period. Between two samples, $P_t$ must be approximated using an interpolation function. This interpolation function must be chosen to minimize the approximation error as a function of the sampling period.

Let $\operatorname{Int}:V_{n,k}\times V_{n,k}\rightarrow {\mathfrak so}(n)$ be an interpolation function. It is thus required that $\operatorname{Int}$ should be such that given $\delta z_k=\operatorname{Int}(P_{k\delta t},P_{(k+1)\delta t})$, the likelihood based on discrete observation will converge to the continuous solution for $\delta t \rightarrow 0$.

\begin{theorem}
\label{prop1}
The Riemann sum
$$
\tilde{\mathcal{S}}_{n\delta t}=\sum_{k=0}^n <x_{k\delta t},\operatorname{Int}(P_{k\delta t},P_{(k+1)\delta t})>	
$$ 
with $n=t/\delta t$ converges towards 
$$
{\mathcal S}_t=\int_0^t <x_s,dz_s>
$$
in the sense $\mathbb{E}[|\tilde{\mathcal{S}}_{n\delta t}-\mathcal{S}_t|^2]\rightarrow 0$ when $\delta \rightarrow 0$ if the interpolation function $\operatorname{Int}:V_{n,k}\times V_{n,k}\rightarrow \mathfrak{so}(n)$ satisfies the following conditions:
\begin{itemize}
\item Its diagonal elements are nul, \emph{i.e.} $\operatorname{Int}(P,P)=0$ for all $P\in V_{n,k}$.
\item The function $\operatorname{Int}$ is $\mathcal{C}^2 \left(V_{n,k}\right)$.
\item $\nabla \operatorname{Int}(P,P)[v]=v^{\mathcal H} R^T$ for all $v \in T_P V_{n,k}$ and $\Pi(R)=P$.
\item $\nabla^2 \operatorname{Int}(P,P)[v]=0$ for all $v \in T_P V_{n,k}$.
\end{itemize}
Where the differentials  ($\nabla I$ and $\nabla^2 I$) are computed with respect to the second variable.
\end{theorem}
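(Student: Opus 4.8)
The plan is to prove the $L^2$ convergence by reducing the global error to a sum of local errors over the sampling intervals $[k\delta t,(k+1)\delta t]$ and estimating each one with the four hypotheses. First I would replace the target $\mathcal S_t$, an It\^o integral against the semimartingale $z$, by its left-point Riemann sum $\sum_k\langle x_{k\delta t},\Delta z_k\rangle$, where $\Delta z_k=z_{(k+1)\delta t}-z_{k\delta t}$, showing that the remainder $\sum_k\int_{k\delta t}^{(k+1)\delta t}\langle x_s-x_{k\delta t},dz_s\rangle$ is $L^2$-negligible; this uses only that $x$ is the bounded-moment process of (\ref{eq_model}), so that $\mathbb E|x_s-x_{k\delta t}|^2=O(\delta t)$, together with the independence of the driving noises $b$ and $w$. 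After this reduction it suffices to show that $\sum_k\langle x_{k\delta t},\operatorname{Int}(P_{k\delta t},P_{(k+1)\delta t})-\Delta z_k\rangle\to 0$ in $L^2$.

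The key device is the Stratonovich chain rule applied to the map $P\mapsto\operatorname{Int}(P_{k\delta t},P)$, which is legitimate under the $\mathcal C^2$ hypothesis, along the observed path: it gives $\operatorname{Int}(P_{k\delta t},P_{(k+1)\delta t})=\int_{k\delta t}^{(k+1)\delta t}\nabla\operatorname{Int}(P_{k\delta t},P_s)[\circ dP_s]$, while by (\ref{eq_antidev_st1}) one has $\Delta z_k=\int_{k\delta t}^{(k+1)\delta t}\omega(\circ dP_s,P_s)$. Working in Stratonovich form keeps every differential $\circ dP_s$ tangent to $V_{n,k}$, so no ambient normal component appears and the comparison stays intrinsic. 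The local error is then $\int_{k\delta t}^{(k+1)\delta t}\langle x_{k\delta t},C_s[\circ dP_s]\rangle$ with $C_s=\nabla\operatorname{Int}(P_{k\delta t},P_s)-\omega(\cdot,P_s)$. Here the hypotheses enter directly: the vanishing-diagonal condition makes the chain-rule representation start from zero, the first-derivative condition gives $C_{k\delta t}=0$ because $\nabla\operatorname{Int}(P,P)[v]=v^{\mathcal H}R^T=\omega(v,P)$, and the vanishing-Hessian condition forces the martingale part of $dC_s$ to vanish at $s=k\delta t$. The $\mathcal C^2$ hypothesis, combined with compactness of $V_{n,k}$, supplies the uniform bounds on the derivatives of $\operatorname{Int}$ and of $\omega(\cdot,P)$ needed to control the Taylor remainders.

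I would then decompose each local error into a martingale difference $m_k$ and its conditional mean $d_k=\mathbb E[\,\cdot\mid\mathcal F_{k\delta t}]$. For the martingale part, orthogonality gives $\mathbb E[|\sum_k m_k|^2]=\sum_k\mathbb E|m_k|^2$; the interval estimate $\mathbb E|m_k|^2=O(\delta t^2)$, obtained from $\mathbb E|P_s-P_{k\delta t}|^2=O(\delta t)$ and $C_{k\delta t}=0$, yields a total of order $t\,\delta t\to 0$. For the predictable part one splits the Stratonovich integral into its It\^o integral and the It\^o--Stratonovich correction $\tfrac12\,d\langle C^{*}x_{k\delta t},P\rangle_s$; because the dominant increment is $\chi(\Delta w_k,P_{k\delta t})$, the tensor $\Delta_k\otimes\Delta_k$ is to leading order supported on $T_{P_{k\delta t}}V_{n,k}\times T_{P_{k\delta t}}V_{n,k}$, so the vanishing-Hessian condition annihilates exactly the $O(\delta t)$-per-step contribution that would otherwise persist, leaving $d_k=o(\delta t)$ and hence $\sum_k|d_k|=o(1)$.

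The main obstacle is precisely this predictable estimate: I must show the per-step bias is $o(\delta t)$ rather than $O(\delta t)$, since an $O(\delta t)$ bias would accumulate to a nonzero limit and corrupt the Stratonovich correction already encoded in $dz=\omega(\circ dP,P)$. This amounts to tracking which contractions of the noise covariance $\sigma_w^2$ survive in $\tfrac12\,d\langle C^{*}x,P\rangle_s$ and checking that the vanishing-Hessian condition removes the tangential double contraction while the normal part is of higher order in $\delta t$. A secondary difficulty is that only $\mathcal C^2$ regularity is available, so these corrections cannot be reached by a third-order expansion and must instead be handled through the integral form of the Taylor remainder together with the moment bounds $\mathbb E|P_s-P_{k\delta t}|^p=O(\delta t^{p/2})$.
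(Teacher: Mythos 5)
Your overall route is the same as the paper's: split the error into (i) the discrepancy between $\operatorname{Int}(P_{k\delta t},P_{(k+1)\delta t})$ and the true increment $\Delta z_k$, controlled by a stochastic chain-rule expansion of $\operatorname{Int}(P_{k\delta t},\cdot)$ together with the four diagonal conditions, and (ii) the left-endpoint Riemann-sum error from freezing $x$, controlled by It\=o isometry and $\mathbb{E}|x_s-x_{k\delta t}|^2=O(\delta t)$. The paper does exactly this (in It\=o rather than Stratonovich form), and your handling of (ii) and of the martingale part of (i) (orthogonality of the $m_k$, $O(\delta t^2)$ per interval, total $O(t\,\delta t)$) is correct and in fact more careful than the paper's.

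The genuine gap is in the predictable part, exactly where you place the ``main obstacle,'' and the mechanism you invoke cannot close it. Writing the local error as $\int_{k\delta t}^{(k+1)\delta t}C_s[\circ dP_s]$ with $C_s=\nabla\operatorname{Int}(P_{k\delta t},P_s)-\omega(\cdot,P_s)$ and passing to It\=o form, the per-step bias is
\[
\tfrac12\int_{k\delta t}^{(k+1)\delta t}\Bigl(\nabla^2\operatorname{Int}(P_{k\delta t},P_s)-D_P\omega(\cdot,P_s)\Bigr)\cdot d\langle P\rangle_s \;+\;\text{(negligible drift terms)}.
\]
Condition $4)$, with uniform continuity of the second derivative (available from $\mathcal{C}^2$ and compactness), kills the $\nabla^2\operatorname{Int}$ piece up to $o(\delta t)$ per step. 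But the second piece, $\tfrac12\,D_P\omega(\cdot,P_s)\cdot d\langle P\rangle_s$, is the It\=o--Stratonovich correction of the target increment $\Delta z_k=\int\omega(\circ dP_s,P_s)$ itself; it involves only $\omega$ and the noise, and none of the four hypotheses — which constrain $\operatorname{Int}$ alone — says anything about it. Generically it is $O(\delta t)$ per step and would sum to an $O(t)$ error, so attributing its cancellation to the vanishing-Hessian condition is a misattribution: condition $4)$ cannot see $\omega$. What actually saves the theorem is a separate structural lemma: for noise of the form $(\circ dw_t)P_t$, the contraction $\sum_a D_P\omega(E_aP,P)[E_aP]$ over an orthonormal basis $\{E_a\}$ of $\mathfrak{so}(n)$ vanishes identically — for $V_{n,1}$ because $P^TE_aP=0$ for skew-symmetric $E_a$, and in general because, using the extension $\omega(v,P)=vP^T-Pv^T-\tfrac12 P(P^Tv-v^TP)P^T$, the contraction reduces to $-\sum_a[E_a,\,P(P^TE_aP)P^T]$, each term of which vanishes in an adapted basis. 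Equivalently, the It\=o and Stratonovich readings of $dz_t=\omega(\circ dP_t,P_t)$ coincide. The paper uses this fact silently (it freely identifies the It\=o integral $\int(dP_s)^{\mathcal H}R_s^T$ with $dz_s$, which is defined in Stratonovich form), so you have put your finger on precisely the step where the real content lies; but your plan as written leaves it open, and closing it requires this computation on $\omega$, not the hypotheses on $\operatorname{Int}$.
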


\begin{proof}
Considering the function $f=\operatorname{Int}(P_{k\delta t},.)$, then condition $2)$ allows the use of It\=o lemma:
\begin{align*}
&f(P_{(k+1)\delta t})-f(P_{k\delta t})=\int_{k\delta t}^{(k+1)\delta t}\nabla f[dP_s]+\\ 
&\hspace{1cm}\frac{1}{2}\int_{k\delta t}^{(k+1)\delta t}\operatorname{trace}\left([dP_s]^T\nabla^2f[dP_s]\right)\\
\end{align*}

Condition $4)$ sets the last term to $o(\delta t)$ whereas condition $3)$ sets the first term to $\int_{k\delta t}^{(k+1)\delta t} (dP_s)^\mathcal{H}R_s^T+o(\delta t)$. Then, replacing $f$ by $\operatorname{Int}(P_{k\delta t},.)$
\begin{align*}
&\operatorname{Int}(P_{k\delta t},P_{(k+1)\delta t})-\operatorname{Int}(P_{k\delta t},P_{k\delta t})\\
&\hspace{1cm}=\int_{k\delta t}^{(k+1)\delta t}(dP_s)^{\mathcal{H}}R_s^T\\
\end{align*}
Now, thanks to condition $1)$, one gets that $\operatorname{Int}(P_{k\delta t},P_{k\delta t})=0$.

Separating the integral $\mathcal{S}_t$ into $t/\delta t$ short integrals gives, up to a remaining integral between $t$ and $kt/\delta t$, the following expression:
\begin{align*}
&\mathbb{E}\left[|\tilde{\mathcal{S}}_{n\delta t}-\mathcal{S}_t|^2\right]\\
&=\mathbb{E}\left[\left|\sum_{k\leq t/\delta t} \int_{k\delta t}^{(k+1)\delta t}<x_{k\delta t}-x_s,dz_s>\right|^2\right]\\
&=\mathbb{E}\left[\left|\sum_{k\leq t/\delta t} \int_{k\delta t}^{(k+1)\delta t}<x_{k\delta t}-x_s,(dR_s)R_s^T>\right|^2\right]\\
& \text{(as $\sigma_w^2I_n$ is orthogonal to $\mathfrak{so}(n)$)}\\
&\leq\sum_{k\leq t/\delta t}\mathbb{E}\left[ \left|\int_{k\delta t}^{(k+1)\delta t} <\hat{x}_{k\delta t}-\hat{x}_s,(dP_s)^\mathcal{H}R_s^T>\right|^2\right]
\end{align*}
The square variations of $x_t$ during a time $\delta t$ being bounded by $O(\delta t)$, the variation of $x_t$ are bounded by $O(\delta t)$ too. The integral is then bounded by $O(\delta t^2)$ thanks to the It\=o isometry property of the integral
\begin{align*}
&\mathbb{E}\left[|\tilde{\mathcal{S}}_{n\delta t}-\mathcal{S}_t|^2\right]\\
&=\mathbb{E}\left[\sum_{k\leq t/\delta t} \int_{k\delta t}^{(k+1)\delta t} O(\delta t)||(dP_s)^\mathcal{H}R_s^T||^2\right]\\
&=\sum_{k\leq t/\delta t}O(\delta t^2)\\
&=O(\delta t)
\end{align*}
This means that $S^\delta$ converges towards $\mathcal{I}$ in the mean square error sense when $\delta t$ shrinks to $0$ with a linear convergence rate.
\end{proof}

Based on this theorem, we can show that a linear interpolation between successive samples can be used to approximate the likelihood.

\begin{proposition}
\label{prop_interp_lin}
The interpolation function $\operatorname{Int}(P',P)=\frac{1}{2}d\Pi_{R'}\left(RR'^{T}-R'R^{T}\right)$ with $\Pi(R)=P$ and $\Pi(R')=P'$ satisfies the conditions of Theorem \ref{prop1}.
\begin{proof}
The conditions $1)$ and $2)$ from theorem 1 are directly verified. Using the notation $\sigma=v^{\mathcal H}R^T$, Condition $3)$ is verified as:
\begin{align*}
\nabla \operatorname{Int}(P,P)[v]&=\frac{d}{dt}\left.\operatorname{Int}(P,\exp(t\sigma)P)\right|_{t=0}\\
&=\frac{d}{dt}\frac{1}{2}\left(\exp(t\sigma)RR^T-R^T\exp(t\sigma)^T\right)|_{t=0}\\
&=\frac{d}{dt}\frac{1}{2}\left(\exp(t\sigma)+\exp(t\sigma)\right)|_{t=0}\\
&\text{ as $\sigma\in\mathfrak{so}(n)$}\\
&=\sigma=vR^T
\end{align*}

Finally, condition $4)$ is also satisfied because:
\begin{align*}
\nabla^2 \operatorname{Int}(P,P)[v]\\
&\hspace{-1cm}=\frac{d^2}{dtds}\operatorname{Int}(P,\exp(s\sigma)\exp(t\sigma)P)|_{t=0,s=0}\\
&\hspace{-1cm}=\frac{d}{dt}\frac{1}{2}\left(\exp(s\sigma)\exp(t\sigma)RR^T-\right.\\
&\hspace{-0.5cm}\left. R^T\exp(t\sigma)^T\exp(s\sigma)^T\right)|_{t=0}\\
&\hspace{-1cm}=\frac{d}{dt}\frac{1}{2}\left(\exp((t+s)\sigma)-\exp((t+s)\sigma)\right)|_{t=0}\\
&\hspace{-1cm}=0
\end{align*}
\end{proof}
\end{proposition}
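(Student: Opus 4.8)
The plan is to verify the four hypotheses of Theorem \ref{prop1} in turn, with the first two being essentially immediate and the real content lying in the first- and second-order conditions 3) and 4). For condition 1), choose the \emph{same} pre-image $R'=R$ when both arguments coincide; then $RR'^{T}-R'R^{T}=RR^{T}-RR^{T}=0$, and since $d\Pi_{R'}$ is linear it sends $0$ to $0$, so $\operatorname{Int}(P,P)=0$. For condition 2), observe that $\operatorname{Int}$ is assembled from matrix multiplication, transposition, the smooth projection $\Pi$ and its differential $d\Pi$; each of these is $\mathcal{C}^{\infty}$ on the relevant manifolds, so $\operatorname{Int}$ is $\mathcal{C}^{2}$. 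One should also confirm that the expression does not depend on the choice of pre-images $R,R'$, which follows from the pre-image invariance of the horizontal translation into $\mathfrak{so}(n)$ already established in Section \ref{sec_geom}.

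The key device for conditions 3) and 4) is to compute the differentials along curves generated by the group action, since these realize arbitrary tangent directions with a controlled horizontal lift. Fix the first argument at $P$ with pre-image $R$, set $\sigma=v^{\mathcal{H}}R^{T}\in\mathfrak{so}(n)$, and move the second argument along $P(t)=\exp(t\sigma)P$. By left-invariance of $\Pi$ (equation (\ref{eq_action_group})) a pre-image of $P(t)$ is $\exp(t\sigma)R$, so upon substitution the two products inside $\operatorname{Int}$ collapse:
\begin{equation*}
\exp(t\sigma)R\,R^{T}=\exp(t\sigma),\qquad R\,\bigl(\exp(t\sigma)R\bigr)^{T}=\exp(-t\sigma),
\end{equation*}
using $\exp(t\sigma)^{T}=\exp(-t\sigma)$ for $\sigma\in\mathfrak{so}(n)$. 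Hence
\begin{equation*}
\operatorname{Int}\bigl(P,\exp(t\sigma)P\bigr)=\tfrac{1}{2}\,d\Pi_{R}\bigl(\exp(t\sigma)-\exp(-t\sigma)\bigr),
\end{equation*}
which is an \emph{odd} function of $t$ valued in $\mathfrak{so}(n)$.

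The oddness is exactly what makes both remaining conditions fall out. Differentiating once at $t=0$ gives $\tfrac{1}{2}\,d\Pi_{R}(2\sigma)=\sigma=v^{\mathcal{H}}R^{T}$, which is condition 3); here $\sigma$ already lies in the horizontal subalgebra, so it is left unchanged by the horizontal projection implicit in $d\Pi_{R}$. For condition 4) I would evaluate the second differential on the two-parameter orbit $(s,t)\mapsto\exp(s\sigma)\exp(t\sigma)P=\exp((s+t)\sigma)P$; the same collapse yields $\tfrac{1}{2}\,d\Pi_{R}\bigl(\exp((s+t)\sigma)-\exp(-(s+t)\sigma)\bigr)$, and the mixed second derivative at the origin is the value of $u\mapsto\tfrac{1}{2}\bigl(\sigma^{2}\exp(u\sigma)-\sigma^{2}\exp(-u\sigma)\bigr)$ at $u=0$, which is $0$. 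Equivalently, an odd function carries no quadratic term at the origin.

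I expect the main obstacle to be the differential bookkeeping rather than any hard estimate. One must be careful about the convention under which $d\Pi_{R'}$ returns an element of $\mathfrak{so}(n)$, as required since $\operatorname{Int}$ is $\mathfrak{so}(n)$-valued, rather than of $T_{P'}V_{n,k}$, and must check that evaluating the second derivative along these orbit curves computes $\nabla^{2}\operatorname{Int}$ faithfully, i.e.\ that no acceleration or connection term contaminates the result. Once the identification of $T_{R'}SO(n)$ with $\mathfrak{so}(n)$ by translation is fixed consistently, the algebraic cancellation above renders all four verifications routine.
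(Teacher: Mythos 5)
Your proposal is correct and follows essentially the same route as the paper: conditions 1) and 2) are checked directly, and conditions 3) and 4) are obtained by differentiating $\operatorname{Int}\bigl(P,\exp(t\sigma)P\bigr)=\tfrac{1}{2}\,d\Pi_{R}\bigl(\exp(t\sigma)-\exp(-t\sigma)\bigr)$ along the orbit curves generated by $\sigma=v^{\mathcal{H}}R^{T}$, with the same algebraic collapse $\exp(t\sigma)R\,R^{T}=\exp(t\sigma)$ and $R\bigl(\exp(t\sigma)R\bigr)^{T}=\exp(-t\sigma)$, exactly as in the paper. If anything, your write-up is more careful: the paper's displayed computation contains transpose/sign typos (e.g.\ $R^{T}\exp(t\sigma)^{T}$ where $RR^{T}\exp(t\sigma)^{T}=\exp(-t\sigma)$ is meant) and silently drops $d\Pi_{R}$, issues your oddness observation and your remark on the $\mathfrak{so}(n)$-valued convention for $d\Pi_{R'}$ handle cleanly.
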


Thanks to Proposition \ref{prop_interp_lin}, we are now able to implement a Monte-Carlo solution based on discrete samples. For a better readability, the interpolation term between two successive samples $\operatorname{Int}\left(P_{k\delta t},P_{(k+1)\delta t}\right)$ will be denoted $\delta P_k$ in the sequel.


\section{Practical solution to the filtering problem}
\label{sec_impl}
\subsection{Implementation via a Monte-Carlo method}

The particle filter is a Monte-Carlo method to approximate the solution given by Equation (\ref{eq_ks_st}) for non linear model. Despite that particle filters has been heavily used and studied \cite{Doucet}, the application of this method to perform estimation from partial observation on the Stiefel manifold has not been used before.

The main idea of the particle filter is to approximate the expectation in (\ref{eq_ks_st}) using the law of large numbers. Recall that:
$$
\rho_t(\phi) = \mathbb{E} \left[\phi(x')L_t(P,x')|\mathcal{P}_t \right]
$$
with $L_t$ the likelihood defined in Equation (\ref{eq_likelihood_st}). The process $x'_t$ is a copy of $x_t$ (in the sense with the same model of propagation) but, contrary to $x_t$, independent from $P_t$. 

Let $X^i_t$, with $i=1,\ldots,N$, be $N$ processes identical to $x_t$ called particles. They represent candidates to estimate the process $x_t$. The law of large numbers states that $\rho^N_t$ defined as:
$$
\rho^N_t(\phi)=\frac{1}{N}\sum_{i=1}^N \phi(X^i)L_t(P,X^i)
$$ 
will converge with $N$ almost surely to $\rho_t(\phi)$. Therefore, the previous equation gives an approximation of the solution by determining $\rho^N_t$ and normalizing it. It is noticeable that the particles are not observable and must be simulated. This means that the model of propagation (\ref{eq_model}) for $x_t$ should be known.

Furthermore, it is assumed that the process $P_t$ is not continuously observed and let denotes $\delta t$ the sampling time. These means that it is necessary to consider a time discretized version of $\rho^N_t$, denoted $\rho^N_n$, with $n=\lfloor t/\delta t\rfloor$, as:
$$
\rho^N_n(\phi)=\sum_i \phi(\tilde{X}^i)L_{\delta t}(\tilde{\mathcal{P}}_n,\tilde{X}^i_0,...,\tilde{X}^i_n)
$$
where $\tilde{X}^i_n=X^i_{n\delta t}$, $\tilde{\mathcal{P}}_n=\{P_{k\delta t},k\leq n\}$ and the likelihood $L_{\delta t}$ is defined by:
$$
L_{\delta t}(i,n)=\exp\left(\frac{1}{\sigma_w^2}\sum_{k\leq n}<H^i_k,\delta \tilde{P}_k>-\frac{1}{2}||H^i_k ||^2\right)
$$
with $H^i_k=\chi(X^i,\tilde{P}_k)$. It has been proved in Proposition \ref{prop_interp_lin} that the discrete likelihood $L_{\delta t}$ converges towards the likelihood $L_t$ from (\ref{eq_likelihood_st}).

In order to implement such a solution, two independent problems must be tackled:
\begin{itemize}
\item
The simulation of the particles $\tilde{X}^i$: \\
It will be supposed that $x_t$ is a Markov process. Consequently, $\tilde{X}^i$ is a Markov chain with transition kernel $q_{\delta t}$ defined after (\ref{eq_model}) and $\tilde{X}^i_{n+1}$ is directly sampled from $q_{\delta t}(\tilde{X}^{i}_{n},\ .\ )$.
\item
The computation of the likelihood $L_{\delta t}(\tilde{\mathcal{P}}_n,\tilde{X}^i_0,...,\tilde{X}^i_n)$:\\
Considering the last term of the Riemannian sums:
\begin{align*}
L_{\delta t}(i,n)=&\exp\left(\frac{1}{\sigma_w^2}\sum_{k\leq n}<H^i_k,\delta \tilde{P}_k>-\frac{1}{2}||H^i_k ||^2\right)\\
&=L_{\delta t}(i,n-1)l^i_{\delta t}
\end{align*}
where $l^i_{\delta t}=\exp\left(\frac{1}{\sigma_w^2}<H^i_n,\delta \tilde{P}_n>-\frac{1}{2}||H^i_n ||^2\right)$, this decomposition shows that the likelihood can be computed adaptively when new samples are available.
\end{itemize}

This leads to Algorithm \ref{alg_filt_part2}, here described to estimate the conditional distribution $\pi^{N,\delta t}_n(\phi)$ as
$$
\pi^{N,\delta t}_n(\phi)=\frac{\rho^{N,\delta t}_n(\phi)}{\rho^{N,\delta t}_n(1)}=\sum_{i=1}^N\phi(\tilde{X}^{i}_n)w^i_n
$$
where the coefficient
$$
w^i_n=\frac{L_{\delta t}(\tilde{\mathcal{P}}_n,\tilde{X}^i_0,...,\tilde{X}^i_n)}{\sum_{j=1}^N L_{\delta t}(\tilde{\mathcal{P}}_n,\tilde{X}^j_0,...,\tilde{X}^j_n)}
$$ is called the "weight" associated to the particle $i$.

\begin{algorithm}
\caption{Particle filter algorithm\label{alg_filt_part2}}
\begin{itemize}
\item
For the initialization, generate $N$ particles from a priori $p_0$: $\tilde{X}^i_0\sim p_0$ and set $w^i_0=1/N$.
\item
At a time $n> 0$:
\begin{enumerate}
\item
Propagate the particles $\tilde{X}^i_n \sim q_{\delta t}(\tilde{X}^i_{n-1},\ .)$  
\item
Update the weight $w^i_n$ of each particle as: $w^i_n=w^i_{n-1} l^i_{\delta t}$ with 
$$
l^i_{\delta t}= \exp\left(\frac{1}{\sigma_w^2}<H^i_n,\delta \tilde{P}_n>-\frac{1}{2}||H^i_n ||^2\right).
$$
\item
Normalize the weights: $w^i_n=w^i_n/\sum_j w^j_n$
\item
If $\left(\sum_i (w^i_n)^2\right)^{-1}<N/2$, generates $[m^1\ ...\ m^N]\sim\operatorname{multinomial}(w^1\ ...\ w^N)$ such that $\sum_i m^i=N$. Then, clones $\tilde{X}^i_n$ $m^i$-times and set $w^i_n=\frac{1}{N}$.
\item
Estimate $\pi_t(\phi)$ with $$\pi^{N,\delta t}_n(\phi)=\sum_{i}\phi(\tilde{X}^{i}_n) w^i_n.$$
\end{enumerate}
\end{itemize}
\end{algorithm}

The normalization step (step 3) is not only here to compute $\pi^{N,\delta t}_n$ instead of $\rho^{N,\delta t}_n$ but also to numerically stabilize the computation of the weights. As they are usually smaller than $1$, their consecutive multiplications lead to small values.

Step 4 is called resampling. It is here to prevent a degeneracy due to the finite number of particles. Indeed, the particles are propagating without any restriction or drift imposed by the observation. Without resampling, particles would just explore the space and as they tend to drift away from $x_t$, they would become a bad approximation of $x_t$, because the mean square error $\mathbb{E}[(X^i_t-x_t)^2]$ is linearly growing with time. The number of particles being fixed, their weights quickly degenerate as they are diffusing  away from $x_t$. Due to the normalization step, this leads to the concentration of all the ponderation into one single particle. Even if this particle is the best candidate amongst the all the particles, the mean square error is still linearly growing.

The resampling step consists in killing the particles far away from $x_t$ (in fact, killing the particles with low weights) and cloning the remaining ones. In order to measure if the particles are scattered away from $x_t$, one commonly used criteria is a threshold based on the Effective Sample Size ($ESS_w$) defined as
$$
ESS_w=\left(\sum_i (w^i_n)^2\right)^{-1}.
$$
When $ESS_w$ is lower that say $N/2$, then particles need to be resampled (the criteria $ESS_w$ is small when only a few particles have a preponderant weight). To resample the particles, one can for example sample $[m^1\ ...\ m^N]$ from a multinomial distribution
$$
[m^1\ ...\ m^N]\sim\operatorname{multinomial}(w^1\ ...\ w^N)
$$
such that $\sum_i m^i=N$ (to keep the number of particles constant) and clone the $i^{\text{th}}$ particle $m^i$ times. If $w^i$ is high (particle with a good likelihood, thus a good candidate), then $m^i$ should be high too. This effect will tend to keep only the good candidates, based on the likelihood. However, instead of resampling when the Effective Sample Size becomes too low, resampling is made after a fixed given time. In fact, resampling can be realized at every iteration but it is, computationally speaking, expensive and does not bring noticeable improvements \cite{Doucet}.

\subsection{Simulation results}
This subsection describes the results obtained from a numerical implementation of the particle filter detailled in Algorithm \ref{alg_filt_part2}. For this simulation, the chosen Stiefel manifold was the sphere $V_{3,1}=S^2$. The process $x_t\in \mathbb{R}^3$ is a Brownian motion with a unit variance. The variance of the noise is fixed to $\sigma_w^2=1$.

\begin{figure}
\centering
\subfigure{\epsfig{figure=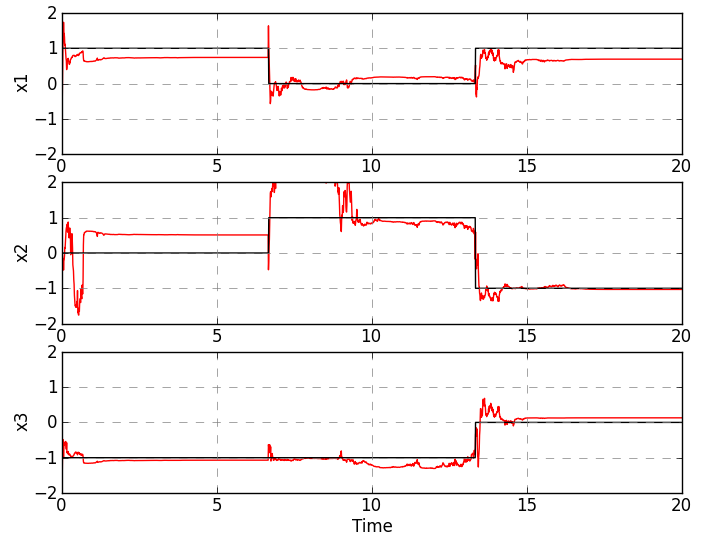,width=0.4\textwidth}}\\ 
\subfigure{\epsfig{figure=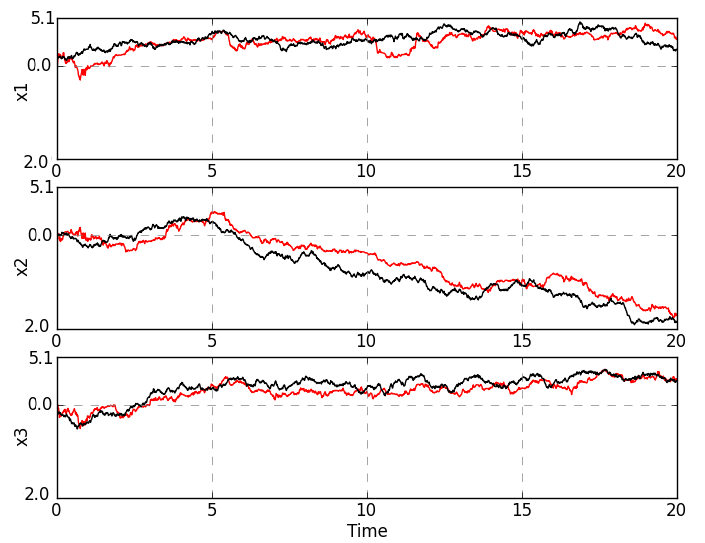,width=0.4\textwidth}}\\
\subfigure{\epsfig{figure=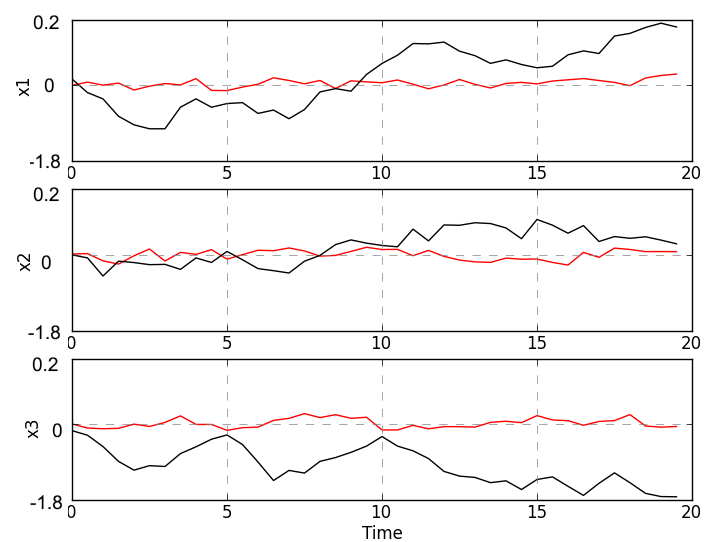,width=0.4\textwidth}}\\
\caption[Evolution of the estimation (red) of each component of $x_t$]{\label{fig_stair} Evolution of the estimation (red) of each component of $x_t$ (black), namely $x_1$, $x_2$ and $x_3$. (Top) The model used for $x_t$ is a stair function and the times at which changes occur are known. The sampling time is $\delta t=10^{-2}$. The particle algorithm properly converges. (Middle) $x_t$ is a Brownian motion in $\mathbb{R}^3$ with unit variance. The algorithm can still estimate properly as the variations are not too fast. (Bottom) For the same model, the sampling time is reduced to $\delta t=0.5$ and the filter is not able to track $x_t$ anymore. The component in the vertical space changes too quickly.}
\end{figure}

To approximate $x_t$, $N=500$ particles are generated from a normal prior distribution $p_0$ centered around the origin with a variance of $2$. Using more particles does not significantly improve the results. Considering a bad prior for generating the particles is not a big issue as the resampling step quickly eliminates the wrong candidates for the estimation. In the first two cases displayed in Figure \ref{fig_stair} (Top and Bottom), the time step for the observation is $\delta t=10^{-2}s$ and it is set to $\delta_t=0.5s$ for the last case (Bottom). The time step for creating the simulation has been fixed to $10^{-3}s$, which is sufficient to consider the process continuous with respect to the observation time step. Figure \ref{fig_stair} illustrates the results obtained for the estimation when the state $x_t$ is a stair function (Top), then when $x_t$ varies slowly (Middle) and finally when the sampling period is too large to be able to track properly the evolution of $x_t$ (Bottom). As long as $x_t$ is slowly varying with respect to $\delta t$, the algorithm is able to completely estimate $x_t$. In the case where $x_t$ follows a stair function model, one could use a classical algorithm to detect abrupt changes in $x_t$ in order to estimate the time instants where the particles should be sampled \cite{Basseville1993} (this was not effectively implemented in the results presented in the Top figure of \ref{fig_stair}, where it was simply assumed that the time where changes occur were known). When a change is detected, the particles are once again sampled from the initial priori to converge toward the new value. In the presented case (Top of figure \ref{fig_stair}), the particles will not drift away because they are at a constant position (they propagate with the same model as $x_t$). Particles strongly merge when they are resampled, leaving less and less possible choice. 

It was mentioned earlier that the vertical component could not be estimated. However, the vertical space is defined with respect to the observation point $P_t$. As $P_t$ will evolve on $V_{n,k}$, the vertical space will change too and the component on the initial vertical space can thus be estimated. As a consequence, it is finally possible to completely estimate the angular velocity. Note that in the case where $x_t$ evolves slowly, the vertical component can still be estimated.

\begin{figure}
\centering
\epsfig{figure=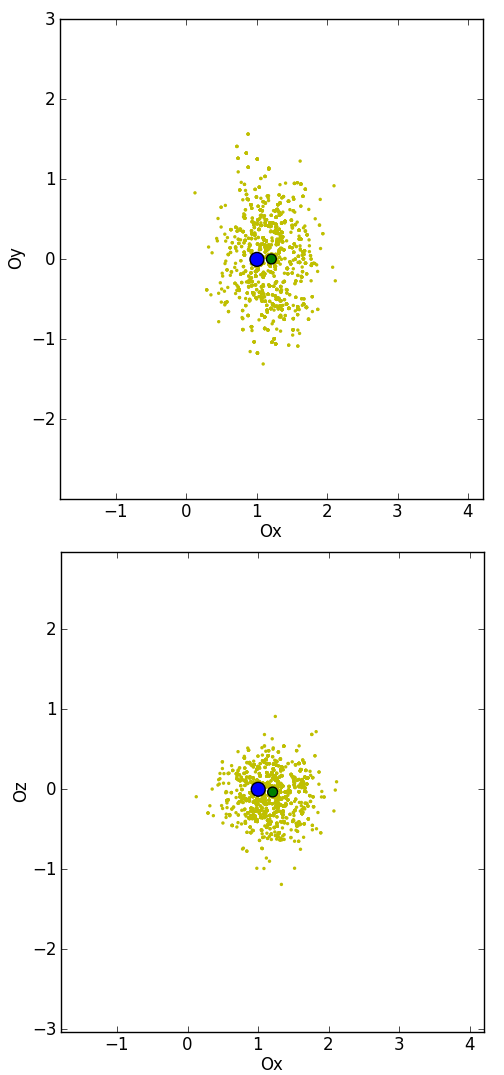,width=6cm,height=11cm}
\caption[Projection of the particles, the state to estimate and the estimation on the plane $Ox,Oy$ and $Ox,Oz$.]{Projection of the particles (yellow), the state to estimate (blue) and the estimation (green) on the plane $Ox,Oy$ and $Ox,Oz$. At the time of the snapshot, $P_t\approx [0,1,0]^T$. Consequently, the component along $Oy$ cannot be estimated. The particles are distributed within an ellipse whose large axis is directed along $Oy$.
\label{fig_ellipse}}
\end{figure}

Now, if the angular velocity is evolving too quickly (with respect to the amount of time particles need to converge), it will not be possible to estimate the vertical component of $x_t$ from the observation of $P_t$. The particles are distributed within an ellipse (see Figure \ref{fig_ellipse}, Top), whose large axis coincides with the direction of $P_t$. Along the direction of $P_t$, the estimation of $x_t$, the empirical average of the particles (yellow dots), is not satisfactory. However, in the other direction, the estimation is correct. This is due to the fact that the particles can only track the component of $x_t$ that has an impact on $P_t$. As the innovation term in Equation (\ref{eq_likelihood_st}) is $<H_s,dP_s>$, only the horizontal component can be observed and therefore estimated. Consequently, particles propagating along the vertical subspace (wich is, for $S^2$, the line defined by $P_t$) are not penalized (their weight does not decrease) and are still considered as good candidates. As a consequence, the estimate is correct in the horizontal direction, but not in the vertical one.
This last comment can be understood as highlighting the cases where the proposed particle filter failed at estimating correctly the complete set of components of $x_t$ due to the lack of information in the observation.   

In the next section, we consider a special case of our problem, namely when observations are complete.

\section{Special case for observation from $SO(n)$}
\label{sec_son}
\subsection{Optimal filtering in $SO(n)$}
The special case when observations are in $V_{n,n}$ can be described using the technique presented in previous sections. However, the fact that $P_t$ is in one of the continuous component of $O(n)$ allows us to treat it also in a different manner. As it is described in Section \ref{sec_geom}, the process $P_t$ can be considered like a process with values in $SO(n)$ (which is one of the two continuous components of $O(n)$) without any loss of generality.

As $P_t\in SO(n)$, then the map $\Pi$ is the identity map and the application $\chi$ is invertible. Therefore, there is no vertical space over $P_t$ and the tangent space is simply the horizontal space. Equation (\ref{eq_antidev_st2}) then reads:
\begin{equation}
dz_t=x_tdt+\circ dw_t
\label{eq_antidev_son}
\end{equation}
where $dz_t=(\circ dP_t)P_t^T$. In this case, a numerical method is no more required as the full process is observed, \textit{i.e} $H_t=x_t$. As the increments in Equation (\ref{eq_antidev_son}) are independent, and the noise is additive, a classic Kalman filter can be used where the anti-development $z_t$ replaces the observed process $P_t$. The conditional distribution of $x_t$ is then a Gaussian distribution with mean $\mu_t$ and a variance $V_t$ such that:

\begin{equation}
\begin{array}{rcl}
d\mu_t&=&F\mu_t dt+\frac{1}{\sigma_w^2}V_t(dz_t-\mu_t dt)\\
dV_t&=&FV_t+V_tF^T-\frac{1}{\sigma_w^2}V_t^2+\sigma_b^2.\\
\label{eq_kb_son}
\end{array}
\end{equation}
This solution represents the optimal filter for observation in $SO(n)$. However, in practice, the same issue as in Section \ref{sec_filt_st} occurs due to the discretization of the observation.

\subsection{Implementation of the solution}
The discrete nature of the observation of $P_t$ does not allow to continuously determine $z_t$. An approximation must be performed using an interpolation function. The linear interpolation function $\operatorname{Int}(P',P)=\frac{1}{2}d\Pi_{R'}\left(PP'^{T}-P'P^{T}\right)$ still converges towards the continuous solution as the sampling period $\delta t$ shrinks to $0$. Amongst all the possible functions, one can also choose to use $\operatorname{Int}(P',P)=\log(PP'^{T})$ as it satifies the conditions of Theorem \ref{prop1}.
\begin{proposition}
\label{prop_interp_geod}
The interpolation function $\operatorname{Int}(P',P)=\log(PP'^{T})$ satisfies the conditions of Theorem \ref{prop1}.

\begin{proof}
The conditions $1)$ and $2)$ are direct. For condition $3)$, an element $v\in T_PSO(n)$ is described with right invariant vector fields $V=\sigma P$ with $\sigma\in\mathfrak{so}(n)$.
\begin{align*}
&\nabla \operatorname{Int}(P,P)[v]\\
&=\frac{d}{dt}\operatorname{Int}(P,\exp(t\sigma)R)|_{t=0}\\
&=\frac{d}{dt}\log\left(\exp(t\sigma)P^T\right)|_{t=0}\\
&=\frac{d}{dt}t\sigma|_{t=0}\\
&=\sigma=vP^T
\end{align*}
Condition $4)$ is also satisfied as:
\begin{align*}
&\nabla^2 \operatorname{Int}(P,P)[v]\\
&=\frac{d^2}{dtds}\operatorname{Int}(P,\exp(s\sigma)\exp(t\sigma)P)|_{t=0,s=0}\\
&=\frac{d^2}{dtds}\log\left(\exp((t+s)\sigma)P^T\right)|_{t=0}\\
&=\frac{d^2}{dtds}(t+s)\sigma|_{t=0,s=0}\\
&=0
\end{align*}
\end{proof}
\end{proposition}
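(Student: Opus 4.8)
The plan is to verify, one by one, the four conditions of Theorem~\ref{prop1}, following exactly the pattern already established for the linear interpolation in Proposition~\ref{prop_interp_lin}. Throughout I would exploit the fact that in the setting of this section $\Pi$ is the identity, so that $R=P$ and the horizontal lift $v^{\mathcal H}$ of a tangent vector $v\in T_PSO(n)$ is simply $v$ itself; conditions $3)$ and $4)$ therefore reduce to showing $\nabla\operatorname{Int}(P,P)[v]=vP^T$ and $\nabla^2\operatorname{Int}(P,P)[v]=0$, the differentials being taken in the second argument.

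First I would dispatch conditions $1)$ and $2)$. Condition $1)$ is immediate, since $\operatorname{Int}(P,P)=\log(PP^T)=\log I_n=0$. Condition $2)$ follows because $\log$ is analytic on a neighbourhood of $I_n$, and for $P'$ close to $P$ the argument $PP'^T$ lies in that neighbourhood; composing with the smooth map $(P',P)\mapsto PP'^T$ shows $\operatorname{Int}$ is $\mathcal C^2$ (in fact $\mathcal C^\infty$) near the diagonal, which is all that is required to license the use of It\=o's lemma inside Theorem~\ref{prop1}.

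The substance is in conditions $3)$ and $4)$, and the single tool driving both is the identity $\log(\exp A)=A$, valid for $A$ near $0$. Writing a tangent vector as $v=\sigma P$ with $\sigma\in\mathfrak{so}(n)$, I would differentiate along the one-parameter curve $t\mapsto\exp(t\sigma)P$, for which $\operatorname{Int}(P,\exp(t\sigma)P)=\log\!\left(\exp(t\sigma)PP^T\right)=\log\exp(t\sigma)=t\sigma$; differentiating at $t=0$ yields $\sigma=vP^T$, which is condition $3)$. For condition $4)$ I would use the curve $(t,s)\mapsto\exp(s\sigma)\exp(t\sigma)P$. The decisive point is that both exponentials share the \emph{same} generator $\sigma$, hence commute and multiply to $\exp((t+s)\sigma)$; so $\operatorname{Int}(P,\exp(s\sigma)\exp(t\sigma)P)=\log\exp((t+s)\sigma)=(t+s)\sigma$, whose mixed second derivative in $t$ and $s$ vanishes identically.

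I would expect the only delicate issue to be conceptual rather than computational: one must be satisfied that evaluating the second differential along the product of one-parameter subgroups with a common generator genuinely captures the Hessian term appearing in the It\=o expansion used to prove Theorem~\ref{prop1}. The reason the computation collapses so cleanly — that $\log$ exactly undoes the exponential and the shared generator forces $\exp(s\sigma)\exp(t\sigma)=\exp((t+s)\sigma)$ — is precisely what makes the geodesic (Riemannian-logarithm) interpolation a natural choice on $SO(n)$: it follows the group geodesics emanating from $P$, along which the anti-development increment is exactly linear in the parameter, so no second-order correction survives.
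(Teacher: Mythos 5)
Your proof is correct and follows essentially the same route as the paper's: the same one-parameter curves $t\mapsto\exp(t\sigma)P$ and $(t,s)\mapsto\exp(s\sigma)\exp(t\sigma)P$, the same use of $\log(\exp A)=A$ to reduce the interpolation to $t\sigma$ and $(t+s)\sigma$, and the same conclusions $\nabla\operatorname{Int}(P,P)[v]=vP^T$ and $\nabla^2\operatorname{Int}(P,P)[v]=0$. Your added care on conditions $1)$ and $2)$ (analyticity of $\log$ near $I_n$) and on the identification $R=P$ when $\Pi$ is the identity only makes explicit what the paper treats as immediate.
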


The function $\operatorname{Int}(P',P)=\log(PP'^{T})$ is called geodesic interpolation. This interpolation function has a higher computational cost than the linear interpolation one, but it is also more accurate for approximating the incremental term $dz_t$.

In order to illustrate this point, Figures \ref{fig_comp_interp} and \ref{fig_comp_interp2} present the difference between a Kalman filter directly applied on the discrete observation of $z_t$ (which is an optimal filter), and the different methods of interpolation based on the discrete observation of $P_t$.

Because the equation of the variance $V_t$ does not depend on the observation and is just an isolated differential equation (with respect to the innovation), the variance for each algorithm used in Figure \ref{fig_comp_interp} is the same (considering that the initial value $V_0$ has always been chosen with the same value) and is consequently not displayed for this kind of comparison.

\begin{figure}
\centering
\centerline{\epsfig{figure=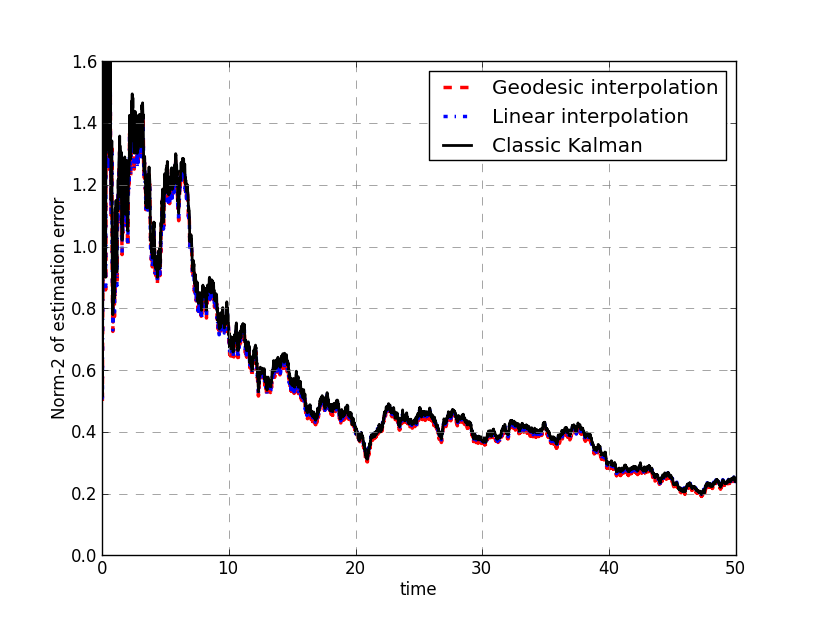,width=9cm,height=7cm}}
\caption[Comparison between linear interpolation, geodesic interpolation, and classic Kalman filter with $\delta t=0.01s$]{Comparison of estimation results (MSE) between linear interpolation, geodesic interpolation, and classic Kalman filter with $\delta t=0.01s$. The state to estimate is constant and the variance of the observation is $\sigma_w^2=1$
\label{fig_comp_interp}}
\end{figure}

In the presented simulations, the process $x_t$ is considered constant. The use of another model might not imply any significant change. The process $z_t$ is first generated with a small time step ($10^{-4}s$) and with a variance $\sigma_w^2=1$ \emph{via} an Euler scheme from the stochastic differential equation (\ref{eq_antidev_son}).

At the same time, a rotational process $P_t$ is constructed from $z_t$. The construction of $z_t$ and $P_t$ is then realized with a time step small enough to consider them as "time continuous".

The process $P_t$ is then sampled with a time step $\delta t$ with $\delta t>>10^{-4}$. The performances of the filter from (\ref{eq_kb_son}) with different methods of interpolation (linear and the geodesic) to approximate $\delta z_{k\delta t}$ are displayed in figure \ref{fig_comp_interp}. In parallel, as $z_t$ has been continuously generated, these performances are compared to a classical Kalman filter taking directly $\delta z_{k\delta t}=z_{(k+1)\delta t}-z_{k\delta t}$. Recall that this term is not available due to the discrete observation of $P_t$. Knowing that in the case of additive noise, the Kalman filter is optimal \cite{Jazw} and because the anti-development is in one-to-one correspondance with the observation, the Kalman filter is used here as a reference filter (as it is known that it is not possible, in the mean square sense, to outperform it).

\begin{figure}
\centering
\centerline{\epsfig{figure=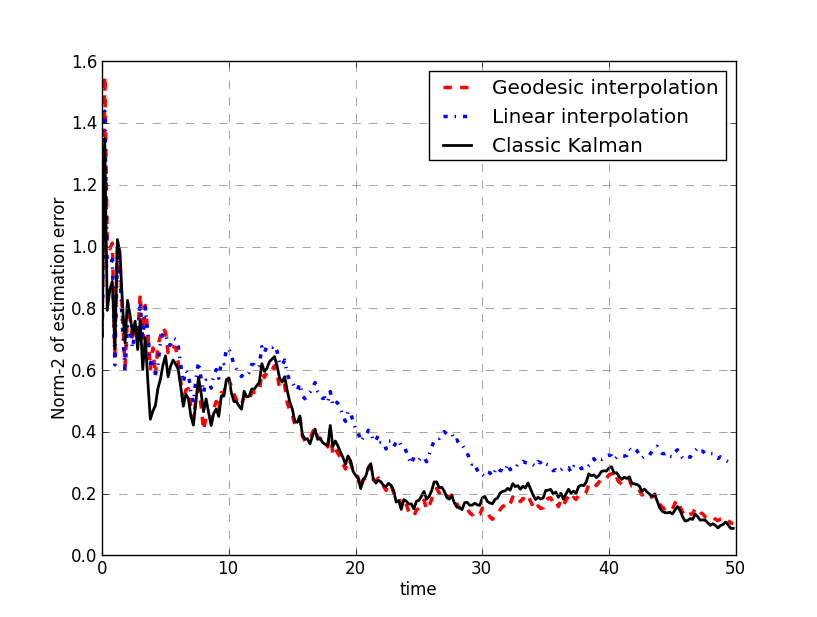,width=9cm,height=7cm}}
\caption[Comparison of estimation results (MSE) between linear interpolation, geodesic interpolation, and classic Kalman filter with $\delta t=0.2s$]{Comparison of estimation results (MSE) between linear interpolation, geodesic interpolation, and classic Kalman filter with $\delta t=0.2s$. The state to estimate is constant and the variance of the observation is $\sigma_w^2=1$.\label{fig_comp_interp2}}
\end{figure}

The results presented in Figure \ref{fig_comp_interp} have been obtained with $\delta t=10^{-2}s$. This time step is small enough so that similar performances for the different methods of interpolation (geodesic and linear) as they should converge towards the same solution. However, differences start to appear if $\delta t$ has larger values. 

In Figure \ref{fig_comp_interp2}, $\delta t$ is increased to $0.2s$. The variation in the observation time step changes the approximation of the process $z_t$ and differences between the two interpolation methods (linear and geodesic) are visible. The linear interpolation is not as accurate as the geodesic interpolation. The linear approximation adds another source of error to the estimation, as a drawback to its simpler computational form.

\begin{figure}
\centering
\centerline{\epsfig{figure=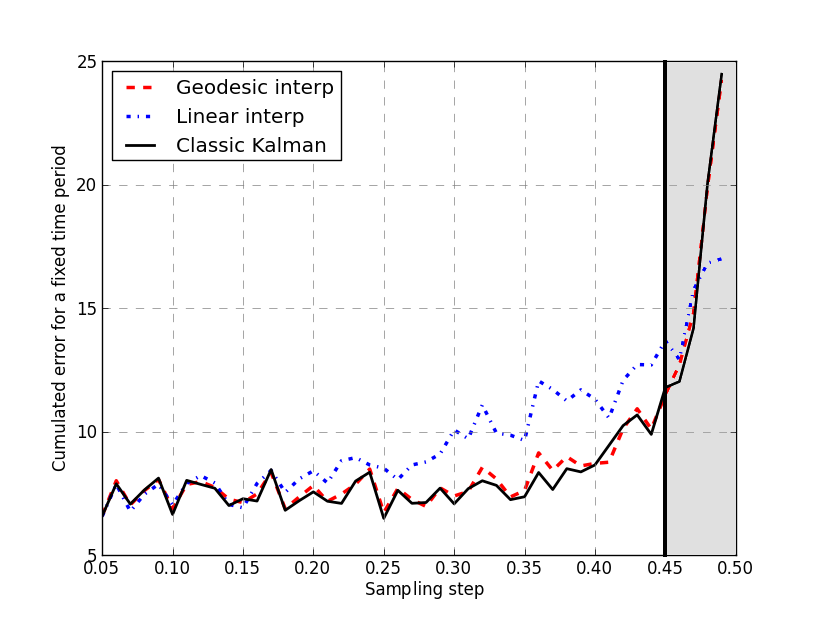,width=9cm,height=7cm}}
\caption[Cumulated error of estimation for $0\leq t \leq 50$ for the linear interpolation, the geodesic interpolation, and the classic Kalman filter for different values of $\delta t$]{Cumulated error of estimation for $0\leq t \leq 50$ for the linear interpolation, the geodesic interpolation, and the classic Kalman filter for different values of $\delta t$. The state to estimate is constant and the variance of the observation is $\sigma_w^2=1$. Each point represents the average of 20 simulations configured with the same parameters. For a sampling time larger than $0.45$ (arbitrarily chosen), the performances are too poor to consider the filters to converge properly any-more. 
\label{fig_comp_step}}
\end{figure}

Finally, in order to observe the influence of the sampling step $\delta t$ on the performance of each interpolation method, Figure \ref{fig_comp_step} illustrates the evolution of the cumulated error for $0\leq t\leq 50$ at a fixed sample step for each method. Just like previously, the cumulated errors should be compared with the cumulated error induced by a proper Kalman filter. It appears that the geodesic interpolation does not create another error term despite that it is an approximation of $\delta z_{n\delta t}$. The cumulated error is the same as the Kalman filter one. The linear interpolation, however, is adding a supplementary error term. As the sampling step is increasing, the approximation is worse and worse. In the end, for $\delta t>0.45s$, the sampling step is too high for any filter to perform a proper estimation of the state. For such cases, solutions such as the extended Kalman filter presented in \cite{Bourmaud} should be privileged.

\section{Conclusion}
In this paper, a solution to the problem of filtering from partial observations is presented. The observed process takes its values in the Stiefel manifold while the signal of interest evolves on the rotation group. Due to the multiplicative nature of the noise, standard methods cannot be applied directly. A solution relying on the construction of an intermediate process, namely the anti-development, is proposed. This solution uses a Monte-Carlo method to overcome the problem of missing information. It is shown that the proposed algorithm allows, in certain contexts, to recover the whole set of components of the signal of interest despite the lack of observations. Finally, in the special case where the entire process can be observed, an optimal filter, together with interpolation methods, is given. This filter can be interpreted as a Kalman filter for observations on the rotation group.

\bibliographystyle{IEEEtran}
\bibliography{biblio}

\begin{thebibliography}{10}
\providecommand{\url}[1]{#1}
\csname url@samestyle\endcsname
\providecommand{\newblock}{\relax}
\providecommand{\bibinfo}[2]{#2}
\providecommand{\BIBentrySTDinterwordspacing}{\spaceskip=0pt\relax}
\providecommand{\BIBentryALTinterwordstretchfactor}{4}
\providecommand{\BIBentryALTinterwordspacing}{\spaceskip=\fontdimen2\font plus
\BIBentryALTinterwordstretchfactor\fontdimen3\font minus
  \fontdimen4\font\relax}
\providecommand{\BIBforeignlanguage}[2]{{%
\expandafter\ifx\csname l@#1\endcsname\relax
\typeout{** WARNING: IEEEtran.bst: No hyphenation pattern has been}%
\typeout{** loaded for the language `#1'. Using the pattern for}%
\typeout{** the default language instead.}%
\else
\language=\csname l@#1\endcsname
\fi
#2}}
\providecommand{\BIBdecl}{\relax}
\BIBdecl

\bibitem{Bloch}
A.~M. Bloch and J.~E. Marsden, ``Stabilization of rigid body dynamics by the
  energy-casimir method,'' \emph{Systems and Control Letters}, vol.~14, no.~4,
  pp. 341 -- 346, 1990.

\bibitem{Lovera}
A.~Astolfi and M.~Lovera, ``Global spacecraft attitude control using magnetic
  actuators,'' in \emph{American Control Conference, 2002. Proceedings of the
  2002}, vol.~2, 2002, pp. 1331--1335 vol.2.

\bibitem{Zamani11}
M.~Zamani, J.~Trumpf, and R.~Mahony, ``Near-optimal deterministic filtering on
  the rotation group,'' \emph{Automatic Control, IEEE Transactions on},
  vol.~56, no.~6, pp. 1411--1414, June 2011.

\bibitem{barrau13}
A.~Barrau and S.~Bonnabel, ``Intrinsic filtering on so(3) with discrete-time
  observations,'' in \emph{Decision and Control (CDC), 2013 IEEE 52nd Annual
  Conference on}, Dec 2013, pp. 3255--3260.

\bibitem{zamani13}
M.~Zamani, J.~Trumpf, and R.~Mahony, ``Minimum-energy filtering for attitude
  estimation,'' \emph{Automatic Control, IEEE Transactions on}, vol.~58,
  no.~11, pp. 2917--2921, Nov 2013.

\bibitem{Landis14}


\bibitem{Bourmaud}
G.~Bourmaud, R.~M\'egret, A.~Giremus, and Y.~Berthoumieu, ``Discrete extended
  kalman filter on lie groups,'' \emph{European Signal Processing Conference},
  2013.

\bibitem{Canet}
P.~Canet, ``Kalman filter estimation of angular velocity and acceleration
  on-line implementation,'' Report, 1994.

\bibitem{Edelman}
A.~Edelman, T.~A. Arias, and S.~T. Smith, ``The geometry of algorithms with
  orthogonality constraints,'' \emph{Siam J. Matrix Anal. Appl}, vol.~20,
  no.~2, pp. 303--353, 1998.

\bibitem{Micka}
O.~Micka and A.~J. Weiss, ``Estimating frequencies of exponentials in noise
  using joint diagonalization,'' \emph{Signal Processing, IEEE Transactions
  on}, vol.~47, no.~2, pp. 341--348, 1999.

\bibitem{chikuze}
Y.~Chikuse, \emph{Statistics on special manifolds}.\hskip 1em plus 0.5em minus
  0.4em\relax Springer Lecture notes in Statistics, 2003.

\bibitem{absil}
P.-A. Absi, R.~Mahony, and R.~Sepulchre, \emph{Optimisation algorithms on
  matrix manifolds}.\hskip 1em plus 0.5em minus 0.4em\relax Princeton
  University Press, 2008.

\bibitem{Ozdemir}
M.~K. Ozdemir and H.~Arslan, ``Channel estimation for wireless ofdm systems,''
  \emph{Communications Surveys Tutorials, IEEE}, vol.~9, no.~2, pp. 18--48,
  2007.

\bibitem{Lui09}
Y.~M. Lui, J.~R. Beveridge, and M.~Kirby, ``Canonical stiefel quotient and its
  application to generic face recognition in illumination spaces,'' in
  \emph{Proceedings of the 3rd IEEE international conference on Biometrics:
  Theory, applications and systems}, ser. BTAS'09.\hskip 1em plus 0.5em minus
  0.4em\relax IEEE Press, 2009, pp. 431--438.

\bibitem{Abs08}
P.-A. Absil, R.~Mahony, and R.~Sepulchre, \emph{Optimization Algorithms on
  Matrix Manifolds}.\hskip 1em plus 0.5em minus 0.4em\relax Princeton, NJ:
  Princeton University Press, 2008.

\bibitem{Jazw}
A.~H. Jazwinski, \emph{Stochastic Processes and Filtering Theory}.\hskip 1em
  plus 0.5em minus 0.4em\relax Academic press, 1970.

\bibitem{Doucet}
A.~Doucet and A.~M. Johansen, ``A tutorial on particle filtering and smoothing:
  fifteen years later,'' 2011.

\bibitem{Basseville1993}
M.~Basseville and I.~Nikiforov, \emph{Detection of abrupt changes: Theory and
  application}, ser. Prentice hall information and system sciences
  series.\hskip 1em plus 0.5em minus 0.4em\relax Prentice Hall, 1993.

\end{thebibliography}

\end{document}